\newtheorem{proposition}{Proposition}
\newtheorem{definition}{Definition}
\newtheorem{remark}{Remark}
\newtheorem{assumption}{Assumption}
\newcommand{\eod}{\ensuremath{\hfill\Box}}
\title{\LARGE \bf
	Resilient Distributed Energy Management for Systems of Interconnected Microgrids
}
\author{Wicak Ananduta, Jos\'{e} Mar\'{i}a Maestre,~\IEEEmembership{Member,~IEEE}, Carlos Ocampo-Martinez,~\IEEEmembership{Senior Member,~IEEE}, and \\Hideaki Ishii,~\IEEEmembership{Senior Member,~IEEE} 
	\thanks{W. Ananduta and C. Ocampo-Martinez are with the Automatic Control Department, Universitat Polit\`{e}cnica de Catalunya, Institut de Rob\`{o}tica i Inform\`{a}tica Industrial
		(CSIC-UPC), Barcelona, Spain
	       (emails: {\tt\small \{wananduta, cocampo\}@iri.upc.edu}).}%
    \thanks{J. M. Maestre and H. Ishii are with Department of Computer Science, Tokyo Institute of Technology, Yokohama, Japan (emails: {\tt \small  pepemaestre@us.es, ishii@c.titech.ac.jp}).}
    \thanks{J. M. Maestre is also with Department of System and Automation
	Engineering, University of Seville, Seville,
	Spain.}   
	\thanks{This work has received funding from the European Union's Horizon 2020 research and innovation programme under the Marie Sk\l{}odowska-Curie grant agreement No 675318 (INCITE). Financial support by the Spanish MINECO project DPI2017-86918-R and the Japanese Society for the Promotion of Science (scholarship PE16048) is also gratefully acknowledged.}
}
\begin{document}

\maketitle
\thispagestyle{empty}
\pagestyle{empty}

\begin{abstract}
	 In this paper, distributed energy management of interconnected microgrids, which is stated as a dynamic economic dispatch problem, is studied. Since the distributed approach requires cooperation of all local controllers, when some of them do not comply with the distributed algorithm that is applied to the system, the performance of the system might be compromised. Specifically, it is considered that adversarial agents (microgrids with their controllers) might implement control inputs that are different than the ones obtained from the distributed algorithm. By performing such  behavior, these agents might have better performance at the expense of deteriorating the performance of the regular agents. This paper proposes a methodology to deal with this type of adversarial agents such that we can still guarantee that the regular agents can still obtain feasible, though suboptimal, control inputs in the presence of adversarial behaviors. The methodology consists of two steps: (i) the robustification of the underlying optimization problem and (ii) the identification of adversarial agents, which uses hypothesis testing with Bayesian inference and requires to solve a local mixed-integer optimization problem. Furthermore, the proposed methodology also prevents the regular agents to be affected by the adversaries once the adversarial agents are identified. In addition, we also provide a sub-optimality certificate of the proposed methodology.
\end{abstract}
\begin{keywords}
Economic dispatch, distributed MPC, distributed optimization, resilient algorithm	
\end{keywords}
\section{Introduction}
In order to face the increasing penetration of distributed generation units, either dispatchable or non-dispatchable ones, and energy storages, such as batteries, supercapacitors, and fuel cells, in electrical networks, distributed approaches for energy management system currently gain a lot of attention, e.g., as discussed in \cite{pantoja2011,hans2018,baker2016,larsen2014}. The advantages of employing a distributed approach for this task include avoiding significant increase of information, communication, and modeling resources used for a centralized dispatch as well as distributing high computational burden \cite{pantoja2011}. 

In a distributed scheme, a distribution electrical network can be viewed as a system of interconnected microgrids \cite{pantoja2011,arefifar2012}, each of which is a controllable entity that has its own local controller. Therefore, the economic dispatch problem of the network must be decomposed and assigned to the local controllers. A distributed optimization approach can then be formulated and applied to solve the problem. In this regard, Model Predictive Control (MPC) strategy, with receding horizon principle, is suitable, particularly when the dynamics of the storages are considered, since the decisions/control inputs are always updated at each sampling time according to the measurement of the states.  Distributed MPC (DMPC) methods that have been proposed to solve economic dispatch problems include those that are based on dual decomposition \cite{larsen2014}, alternating direction method of multipliers (ADMM) \cite{hans2018}, optimality condition decomposition (OCD) \cite{baker2016} and population dynamics \cite{quijano2017}. These approaches are suitable since they are able to obtain an optimal solution given that the related optimization problem is convex. 

Two important features in such distributed approaches are the necessity to share information among the agents (in this case the microgrids) and the cooperation of the agents to apply the algorithm and to comply with the decisions obtained from the distributed algorithm. In this work, we deal with the problem of agent compliance, in which some of the agents do not always implement the decision obtained from the distributed algorithm. Instead, they may implement a different decision that is more beneficial for them but compromise the performance of the other agents and hence the entire system.  

Agents with such adversarial behaviors are identified in \cite{velarde2017} as \textit{liar agents}  or in \cite{sharma2017} as \textit{misbehaving agents}. The authors of \cite{velarde2017} propose a secure dual-decomposition-based DMPC, in which the agents that provide extreme control input values are monitored and disregarded, to deal with this issue. Furthermore, \cite{sharma2017} addresses a cyber-attack problem of a consensus-based distributed control scheme for distributed energy storage systems. The proposed approach in \cite{sharma2017} includes a fuzzy-logic-based detection and a consensus based leader-follower distributed control scheme.  Related to the cyber-security issue of cyber physical systems, in particular power systems, the work of \cite{pasqualetti2013} provides a mathematical framework for attack detection and monitoring. In addition, \cite{leblanc2013,dibaji2017,feng2017} and some of their references also discuss consensus problems in which some of the agents perform adversarial behavior to prevent convergence.

The contributions of this paper is as follows. We study the impact of an adversarial behavior in the distributed energy management system that is based on a DMPC scheme and propose to actively use the storage system and the possibility to establish/disestablish connections between agents to deal with this behavior. To this end, we propose an approach that consists of two main steps. The first step is the robustification of the economic dispatch problem. By considering the robust reformulation, we ensure that the regular agents always obtain a solution that satisfies all the constraints defined in the economic dispatch problem even though there are some agents that do not comply with the decisions. In the second step, we propose an active strategy to identify the adversarial agents that is based on hypothesis testing using Bayesian inference (e.g., \cite{hoff2009}). In this method, each regular agent must solve a local mixed-integer problem to decide the connections with its neighbors at each time instant. By actively connecting/disconnecting with neighbors, regular agents can then assess their hypothesis. Additionally, we also provide a decentralized sub-optimality certificate of our proposed approach. 

Differently from \cite{sharma2017}, we consider a DMPC scheme to act as an energy management. Thus, our work is more related to \cite{velarde2017} than the approaches discussed in \cite{sharma2017,leblanc2013,dibaji2017,feng2017}. However, the methodology that we propose in this paper is different than that proposed in \cite{velarde2017}, in a way that it is more specific for the aforementioned problem and particularly for power systems. Moreover, unlike \cite{velarde2017}, our approach can deal with more than one adversarial agent in a network. 

This paper is structured as follows. In Section \ref{sec:prob_form}, the dynamic economic dispatch problem of interconnected microgrids is formulated. Moreover, a distributed approach that is based on dual decomposition and the adversary model are presented. In Section \ref{sec:prop_ap}, the approach to deal with the adversarial behavior is proposed. Section \ref{sec:sim} provides the numerical simulations and Section \ref{sec:concl} concludes the paper.

\paragraph*{Notations}
The set of real numbers and integers are denoted by $\mathbb{R}$ and $\mathbb{Z}$, respectively. Moreover, $\mathbb{R}_{\geq a}$ denotes all real numbers in the set \{$b: b\geq a, \ b,a \in \mathbb{R}$\} and $\mathbb{Z}_{\geq a}$ denotes all integers in the set \{$b: b\geq a, \ b,a \in \mathbb{Z}$\}. A similar definition can be used for the strict inequality case. For column vectors $v_i$ with $i \in \mathcal{L}=\{l_1,\dots,l_{|\mathcal{L}|}\}$, the operator  $[v_i^{\top}]^{\top}_{i \in \mathcal{L}}$ denotes the column-wise concatenation, i.e., $[v_i^{\top}]^{\top}_{i \in \mathcal{L}} = [v_{l_1}^{\top},\cdots,v_{l_{|\mathcal{L}|}}^{\top}]^{\top}$. The vector $\mathds{1}_n$ denotes $[1 \ 1 \ \cdots \ 1]^{\top} \in \mathbb{R}^{n}$. The set cardinality and Euclidean norm are denoted by $|\cdot|$ and $\|\cdot\|_2$. Furthermore, $\mathbb{P}(\cdot)$ denotes the probability measure. Finally, discrete-time instants are denoted by the subscript $k$.


\section{Problem Formulation \& Distributed Approach}
\label{sec:prob_form}
In this section, the dynamic economic dispatch is formulated as an MPC problem. Afterward, a DMPC strategy based on a distributed optimization approach is formulated for this problem. Finally, the adversaries are defined.

\subsection{Dynamic Economic Dispatch Problem}

Consider a network of interconnected microgrids, which can be represented as an undirected graph $\mathcal{S}=(\mathcal{N},\mathcal{E})$, where $\mathcal{N}=\{1,2,\dots,|\mathcal{N}|\}$ denotes the set of microgrids and $\mathcal{E} \subseteq \mathcal{N} \times \mathcal{N}$ denotes the set of physical links among the microgrids. In this regard, the link $(i,j) \in \mathcal{E}$ implies that it is possible to exchange energy between microgrids $i$ and $j$. Furthermore, denote the set of neighbors of microgrid $i$ by $\mathcal{N}_i$, i.e., $\mathcal{N}_i =\{j:(i,j)\in \mathcal{E} \}$. Each microgrid $i \in \mathcal{N}$ consists of an aggregated local load, denoted by $p^{\mathrm{d}}_{i,k} \in \mathbb{R}_{\geq 0}$, a set of dispatchable distributed generators, denoted by $\mathcal{G}_i$, and a storage system from which electrical energy can be stored and retrieved. Each microgrid can also obtain power by buying it from the main grid. In this economic dispatch problem, optimal power generation of the generators and storage usage are sought by considering their economical costs such that the loads are satisfied.   
Additionally, $p^{\mathrm{d}}_{i,k}$ is assumed to be bounded as follows:
\begin{equation}
|p^{\mathrm{d}}_{i,k} - \hat{p}^{\mathrm{d}}_{i,k}| \leq d^{\mathrm{max}}_{i}, \label{eq:pd_bound}
\end{equation}
where $\hat{p}^{\mathrm{d}}_{i,k},d^{\mathrm{max}}_{i} \in \mathbb{R}_{\geq 0}$ denote the forecast and the upper bound, respectively, which are assumed to be known a priori. Note that the forecast and bound can be obtained from historical data.
 
The power balance equations that must be satisfied by each microgrid $i \in \mathcal{N}$ at each time instant $k \in  \mathbb{Z}_{\geq 0}$ are as follows \cite{hans2018,baker2016}:
\begin{align}
\hat{p}^{\mathrm{d}}_{i,k} - p^{\mathrm{G}}_{i,k} - p^{\mathrm{st}}_{i,k} - p^{\mathrm{im}}_{i,k} - \sum_{j \in \mathcal{N}_i} p^{\mathrm{t}}_{ji,k} = 0,  \label{eq:pow_bal1}\\
p^{\mathrm{t}}_{ij,k} + p^{\mathrm{t}}_{ji,k} = 0, \quad \forall j \in \mathcal{N}_i, \label{eq:coup_cons}
\end{align} 
where $p^{\mathrm{G}}_{i,k} = \sum_{m \in \mathcal{G}_i}p^{\mathrm{g}}_{m,k} \in \mathbb{R}_{\geq 0}$ denotes the total power generation in microgrid $i$, with $p^{\mathrm{g}}_{m,k}$ being the power generation of distributed generator $m$; $p^{\mathrm{st}}_{i,k} \in \mathbb{R}$ denotes the power delivered by or to the storage; $p^{\mathrm{im}}_{i,k} \in \mathbb{R}_{\geq 0}$ denotes the imported power from the main grid; and ${p}^{\mathrm{t}}_{ji,k} \in \mathbb{R}$, for all $j \in \mathcal{N}_i$, denote the power flows between microgrids $i$ and $j$ and can be regarded as a coupled variable. Note that \eqref{eq:pow_bal1} resembles the DC approximation of the power flow equation, in which ${p}^{\mathrm{t}}_{ji,k}$ is a function of the voltage angles. Furthermore, \eqref{eq:coup_cons} ensures that there is an agreement between two neighboring microgrids in terms of the power exchanged between them. 

The dynamics of the storage system, for each $i \in \mathcal{N}$, is represented as follows:
\begin{equation}
x_{i,k+1} = a_i x_{i,k} + b_i p^{\mathrm{st}}_{i,k}, \label{eq:dyn_bat}
\end{equation}
where $x_{i,k}$ denotes the state-of-charge (SoC) of storage $i$, $a_i \in (0,1]$ denotes the the efficiency of the storage and $b_i~=~-\frac{T_{\mathrm{s}}}{e_{\mathrm{cap},i}}$, where $T_{\mathrm{s}}$ and $e_{\mathrm{cap},i}$ denote the sampling time and the maximum capacity of the storage, respectively. 

Additionally, for each microgrid $i \in \mathcal{N}$, some local operational constraints are also considered as follows:
\begin{align}
x^{\mathrm{min}}_i &\leq x_{i,k} \leq x^{\mathrm{max}}_i,  \label{eq:cap_bat}\\
-p^{\mathrm{ch}}_i &\leq p^{\mathrm{st}}_{i,k} \leq p^{\mathrm{dh}}_i,\label{eq:ch_bat} \\
p^{\mathrm{G},\mathrm{min}}_i &\leq p^{\mathrm{G}}_{i,k} \leq p^{\mathrm{G},\mathrm{max}}_i, \label{eq:cap_pg}\\
& \quad \ p^{\mathrm{im}}_{i,k} \leq p^{\mathrm{im,max}}_{i} \label{eq:cap_im}\\ 
-p^{\mathrm{t},\mathrm{max}}_{ji} &\leq p^{\mathrm{t}}_{ji,k} \leq p^{\mathrm{t},\mathrm{max}}_{ji}, \quad \forall j \in\mathcal{N}_i, \label{eq:cap_t}
\end{align}
where $x^{\mathrm{min}}_i,x^{\mathrm{max}}_i  \in \mathbb{R}_{\geq 0}$  denote the minimum and the maximum SoC of the storage of microgrid $i$, respectively. Note that $0 \leq x^{\mathrm{min}}_i \leq x^{\mathrm{max}}_i \leq 1$.  
Moreover, $p^{\mathrm{ch}}_i \in \mathbb{R}_{\geq 0}$ and $p^{\mathrm{dh}}_i \in \mathbb{R}_{\geq 0}$ denote the maximum charging and discharging power of the storage. Furthermore, $p^{\mathrm{G},\mathrm{min}}_i,p^{\mathrm{G},\mathrm{max}}_i  \in \mathbb{R}_{\geq 0}$ denote the minimum and the maximum power generated by the distributed generators of microgrid $i$, respectively, $p^{\mathrm{im,max}}_{i}$ denotes the maximum imported power from the main grid, and  $p^{\mathrm{t},\mathrm{max}}_{ji}$ denotes the maximum energy that can be transferred between microgrid $i$ and $j$. Notice that \eqref{eq:cap_t} is symmetric and $p^{\mathrm{t},\mathrm{max}}_{ji} = p^{\mathrm{t},\mathrm{max}}_{ij}$, for all $(i,j) \in \mathcal{E}$.

Now,
denote the control input vector of microgrid $i$ by $\bm{u}_{i,k} = [ p^{\mathrm{st}}_{i,k} \ p^{\mathrm{G}}_{i,k}  \ p^{\mathrm{im}}_{i,k} \ \bm{u}^{\mathrm{c}\top}_{i,k} ]^{\top} \in \mathbb{R}^{3+|\mathcal{N}_i|}$, where $\bm{u}^{\mathrm{c}}_{i,k}~=~[p^{\mathrm{t}}_{ji,k}]^{\top}_{j \in \mathcal{N}_i}$ is the vector of coupled control input variables. 
We denote $h_p$ as the prediction horizon and consider the quadratic cost function
\begin{equation}
	J_{i,k} = \bm{u}_{i,k}^{\top}R_i\bm{u}_{i,k}, \label{eq:cost_func}
\end{equation}
where $R_i = \mathrm{diag}([ c^{\mathrm{st}}_{i} \ c^{{G}}_{i}  \ c^{\mathrm{im}}_{i} \ {c}^{\mathrm{t}}_{i}\mathds{1}^{\top}_{|\mathcal{N}_i|} ])> 0$, in which $c^{\mathrm{st}}_{i}, \ c^{{G}}_{i},  \ c^{\mathrm{im}}_{i},{c}^{\mathrm{t}}_{i} \in \mathbb{R}_{>0}$ denote the cost of storage operation, the cost of producing energy, the cost of buying energy from the main grid, and the cost of transferring energy to/from the neighbor due to losses \cite{hans2018}. 
Thus, the finite-time optimization problem that underlies an MPC strategy for the dynamic economic dispatch of this system can be written as
\begin{subequations}
	\begin{align}
	&\underset{\{\{\bm{u}_{i,\ell|k}\}_{i \in \mathcal{N}}\}_{\ell=k}^{k+h_p-1}}{\text{minimize}}  \sum_{i \in \mathcal{N}} \sum_{\ell =k}^{k+h_p-1}  J_{i,\ell}(\bm{u}_{i,\ell|k})\\
	 &\quad \quad \text{subject to }  
	  \quad \bm{F}_i{\bm{u}}_{i,\ell|k} \leq {\bm{f}}_{i,\ell}, \ \forall i \in \mathcal{N}, \label{eq:mg_loc_cons}\\
	  &\quad \quad \quad{\bm{u}}^{\mathrm{c}}_{i,\ell|k} + \sum_{j \in \mathcal{N}_i} \bm{G}_{ij}{\bm{u}}^{\mathrm{c}}_{j,\ell|k} = \mathbf{0}, \ \forall i \in \mathcal{N}, \label{eq:mg_coup_cons}
	\end{align}
	\label{eq:MPC_mg}%
\end{subequations}
for all $\ell \in \{k,\dots,k+h_p-1\}$,
where the local constraints \eqref{eq:mg_loc_cons} that only include local control inputs are constructed from \eqref{eq:pow_bal1}, \eqref{eq:dyn_bat}-\eqref{eq:cap_t}, while the coupled constraints \eqref{eq:mg_coup_cons} are constructed from \eqref{eq:coup_cons}.
\begin{remark}
	{Without loss of generality, $p^{\mathrm{G}}_{i,k}$ is considered as one of the control input instead of $p^{\mathrm{g}}_{m,k}$, for all $m \in \mathcal{G}_i$, for simplicity of the exposition. Considering $p^{\mathrm{g}}_{m,k}$, for all $m \in \mathcal{G}_i$, in $\bm{u}_i$ is also straightforward and only increases the dimension of $\bm{u}_i$.} \eod
\end{remark}
\begin{remark}
	In the matrix $R_i$, the weight/cost of exchanging energy, $c_i^{\mathrm{t}}$, is considered to be smaller than the other weights. \eod
\end{remark}

\begin{remark}
	 Problem \eqref{eq:MPC_mg} considers the load forecast, which does not always match with the actual load. Therefore, the proposed robust reformulation in Section \ref{sec:rob_at} takes into account the fact that $p_{i,k}^{\mathrm{d}}$, for all $i \in \mathcal{N}$, are bounded, as expressed in \eqref{eq:pd_bound}. \eod
\end{remark}

Problem \eqref{eq:MPC_mg} is convex since the inequality constraints form a polyhedron, the coupled equality constraints are affine, and the cost function \eqref{eq:cost_func} is strictly convex. 
Furthermore, the following assumption is considered.
\begin{assumption}
	\label{as:feas_sol}
	{For Problem \eqref{eq:MPC_mg}, there exists a nonempty set of feasible solutions and it includes a subset in which $p_{ij,k}^{\mathrm{t}} =p_{ji,k}^{\mathrm{t}} = 0$, for any $(i,j) \in \mathcal{E}$ and $k \in \mathbb{Z}_{\geq 0}$.} \eod
\end{assumption}

 Note that $p_{ij,k}^{\mathrm{t}} =p_{ji,k}^{\mathrm{t}} = 0$ implies that there is no power exchanged between microgrids $i$ and $j$. Based on this assumption, it is considered that each microgrid is able to satisfy its load independently, e.g., in the island mode. However, it is more cost efficient if the microgrids exchange power among them when they are connected.
 
\subsection{Distributed Energy Management based on Dual Decomposition }
In general, many distributed optimization algorithm can be applied as a DMPC strategy to solve Problem \eqref{eq:MPC_mg}. However, for the clarity of the explanation, a DMPC algorithm based on dual decomposition is considered in this paper. It is known that the solution obtained from a distributed algorithm based on dual decomposition converges to the optimal solution if the problem is convex with strictly convex cost function \cite{necoara2008}. In order to design the mentioned algorithm, the Lagrangian function associated to Problem \eqref{eq:MPC_mg} is derived and its dual problem \cite{boyd2010} is decomposed into smaller problems that are assigned to the agents (microgrids).  
The DMPC strategy based on dual decomposition is stated in Algorithm \ref{alg:dual_dec},  
where $\bm{\lambda}_{i,\ell} \in \mathbb{R}^{|\mathcal{N}_i|}$, for all $\ell \in \{k,\dots,k+h_p-1\}$ and all $i \in \mathcal{N}$, are the Lagrange multipliers associated to the coupled constraints \eqref{eq:mg_coup_cons}. In this algorithm, each agent should solve the local optimization problem in step \ref{eq:lo_dd} and update its Lagrange multipliers via the gradient-ascent method at each iteration. Finally, denote the optimal decisions obtained by the DMPC strategy for time $k$ by $\bm{u}_{i,k|k}^{\star}$, for all $i \in \mathcal{N}$. 
\begin{algorithm}
	\caption{DMPC algorithm based on dual decomposition, for each agent $i \in \mathcal{N}$ }
	\begin{algorithmic}[1]
		\State Set $r=1$, $\varepsilon \in \mathbb{R}_{>0}$, and initialize $\bm{\lambda}_{i,\ell}^{(r)}$
		\While{$\left| \left| \left[\begin{matrix}
			\bm{\psi}_{{i,k}}^{\top} & \cdots & \bm{\psi}_{{i,k+h_p-1}}^{\top}
			\end{matrix}\right] \right|\right|_2 > \varepsilon$}
		\State Receive $\bm{\lambda}_{j,\ell}^{(r)}$ for all $\ell \in \{k,\dots,k+h_p-1 \}$ from the neighbors, all $j \in \mathcal{N}_i$, and send $\bm{\lambda}_{i,\ell}^{(r)}$ for all $\ell \in \{k,\dots,k+h_p-1 \}$ to the neighbors
		\State \label{eq:lo_dd} Solve the local optimization problem:
		\begin{equation*}
		\begin{aligned}
		\underset{\{\bm{u}_{i,\ell|k}\}_{\ell=k}^{k+h_p-1}}{\text{minimize}} & \sum_{\ell =k}^{k+h_p-1} \left(  J_{i,\ell}(\bm{u}_{i,\ell|k}) +  \bm{y}_{i,\ell}^{\top}\bm{u}^{c}_{i,\ell|k}\right)\\
		\text{subject to} \quad  &  \text{\eqref{eq:mg_loc_cons}}, \quad \forall \ell \in \{k,\dots,k+h_p-1 \},
		\end{aligned}		
		\end{equation*}
		where $\bm{y}_{i,\ell}^{\top}= \bm{\lambda}^{(r)\top}_{i,\ell} + \sum_{j \in \mathcal{N}_i} \bm{\lambda}_{j,\ell}^{(r)\top}\bm{G}_{ji}$
		\State Receive the decision $\bm{u}^{\mathrm{c}}_{j,\ell|k}$ for all $\ell \in \{k,\dots,k+h_p-1 \}$ from the neighbors, all $j \in \mathcal{N}_i$, and send $\bm{u}^{\mathrm{c}}_{i,\ell|k}$ for all $\ell \in \{k,\cdots,k+h_p-1 \}$ to the neighbors
		\State Update $\bm{\lambda}_{i,\ell}$ for all $\ell \in \{k,\dots,k+h_p-1 \}$ as
		\begin{equation*}
		\bm{\lambda}_{i,\ell}^{(r+1)} = \bm{\lambda}_{i,\ell}^{(r)} + \gamma\bm{\psi}_{{i,\ell}},
		\end{equation*}
		where  $\bm{\psi}_{{i,\ell}}= \left(\bm{u}^{\mathrm{c}}_{i,\ell|k} + \sum_{j \in \mathcal{N}_i} \bm{G}_{ij}\bm{u}^{\mathrm{c}}_{j,\ell|k}\right)$ and $0~<~\gamma~<~1$
		\State $r \leftarrow r+1$
		\EndWhile
	\end{algorithmic}
	\label{alg:dual_dec}
\end{algorithm}

\subsection{Adversary Model}

The agents are classified as regular and adversarial agents based on the following definitions.
\begin{definition}
	{Agent $i$ belongs to the set of regular agents, denoted by $\mathcal{R}$, if it always implements its control input $\bm{u}_{i,k}$ according to the decision obtained from the DMPC strategy, i.e., $\bm{u}_{i,k}=\bm{u}_{i,k|k}^{\star}$, for all $k \geq 0$. Otherwise, agent $i$ belongs to the set of adversarial agents, denoted by $\mathcal{A}$.} \eod
\end{definition}

\begin{definition}\label{def:attack}
	{An attack is defined as the event at one time instant when at least one adversarial agent implements its control input that is different than the decision obtained from the DMPC strategy.} \eod
\end{definition}

We consider the $f$-local model of adversaries, which is stated in Definition \ref{def:f-local}.
\begin{definition}[{\hspace{-1pt}\cite{leblanc2013}}]
	{The set of adversarial agents is $f$-local if $|\mathcal{A} \cap \mathcal{N}_i| \leq f$, for $f \in \mathbb{Z}_{\geq 1}$ and all $i \in \mathcal{N}$.} \label{def:f-local}\eod
\end{definition}

In this paper, the case is restricted for $f=1$, as stated in the following Assumption \ref{assu:f=1}.
\begin{assumption}
	{Each agent has at most one adversarial neighbor.} \eod \label{assu:f=1}
\end{assumption}

\begin{assumption}
	{Regular agents do not have prior knowledge of the occurrence of the attacks, but they have an initial expectation on the probability of attacks, denoted by $P_{\mathrm{at}} \in (0,1]$.} \eod
\end{assumption}

The adversarial agents may try to gain advantage by implementing a different decision that benefits these agents. In the economic dispatch problem, the adversarial agents may get benefit if they decide to reduce the energy production and/or store more energy to their storages. Therefore, in order to meet their power balance equation, they ask their neighbors to provide the deficiency i.e., ${p}_{ij,k}^{\mathrm{t}} > p_{ij,k|k}^{\mathrm{t}\star}$, for $j \in \mathcal{A}$ and $i \in \mathcal{R}$, where $p_{ij,k|k}^{\mathrm{t}\star}$ denotes the decision obtained from the DMPC method.  Although it leads to a global suboptimal solution, the adversarial agents gain an advantage locally by performing this action. In other words, the adversarial agents are not willing to cooperate for their own interest. It is also possible that this behavior is observed due to a fault in the adversarial agents.

\section{Proposed Approach}
\label{sec:prop_ap}
In this section, the problem is reformulated such that the regular agents are robust against attacks and propose a methodology to identify the adversarial neighbors and to prevent an attack from them once they are identified.

\subsection{Robustification Against Attacks}
\label{sec:rob_at}
Regular agents might be affected negatively from the attacks of their adversarial neighbors. Due to the coupled constraints \eqref{eq:coup_cons}, regular agents must conform with the actions taken by their adversarial neighbors. For instance, if the adversarial neighbor $j \in \mathcal{A}$ requests more power than the agreed solution, then the regular microgrids $i \in  \mathcal{N}_j$ must adjust their decision (control inputs $\bm{u}_{i,k}$) in order to satisfy their power balance  \eqref{eq:pow_bal1}. In this regard, the existence of a storage unit at each microgrid could help to mitigate this issue without affecting the operation of the distributed generators. Additionally, uncertain loads might have similar effect to all microgrids and we consider that the deviation between the forecast and the actual load is compensated by the storage units.

In order to meet the power balance \eqref{eq:pow_bal1} when an attack occurs, more power from the storage ($p^{\mathrm{st}}_{i,k}$) is taken. However, it implies that the evolution of the SoC is different than the one that is predicted by the dynamic model \eqref{eq:dyn_bat}. Due to this circumstance, it may happen that the minimum limit of the storage capacity \eqref{eq:cap_bat} is violated. 

In order to ensure that there is no violation on the constraints, a formulation that robustifies Problem \eqref{eq:MPC_mg} against such attacks as well as the uncertainty of the load is proposed.  To this end, we consider the attack as disturbance, denoted by $w_{i,k}^{\mathrm{a}}$, and denote the load disturbance by $w_{i,k}^{\mathrm{d}}$. These disturbances affect the power balance \eqref{eq:pow_bal1} as follows:
\begin{equation}
	\hat{p}^{\mathrm{d}}_{i,k} - p^{\mathrm{G}}_{i,k} - {p}^{\mathrm{st}}_{i,k} - p^{\mathrm{im}}_{i,k} -w_{i,k}^{\mathrm{d}}- w_{i,k}^{\mathrm{a}}- \sum_{j \in \mathcal{N}_i} p^{\mathrm{t}}_{ji,k}  = 0.
\end{equation}
Although $w_{i,k}^{\mathrm{d}}$ and $w_{i,k}^{\mathrm{a}}$ are uncertain, they are bounded by \eqref{eq:pd_bound} and \eqref{eq:cap_t}, respectively. Therefore, agent $i \in \mathcal{R}$ might consider the worst case of the total disturbance, denoted by $w_{i,k}=w_{i,k}^{\mathrm{a}}+w_{i,k}^{\mathrm{d}}$, which is stated as follows:
\begin{align}
w_{i,k}^{\mathrm{max}} = \max_{j \in \mathcal{N}_i}\left(2p^{\mathrm{t},\mathrm{max}}_{ji}\right) +d_i^{\mathrm{max}}, \label{eq:w_max}
\end{align}
due to \eqref{eq:cap_t} and Assumption \ref{assu:f=1}. 
Since $w_{i,k}$ is compensated by the power delivered by/to the storage ${p}^{\mathrm{st}}_{i,k}$, the constraints related to ${p}^{\mathrm{st}}_{i,k}$, i.e., \eqref{eq:cap_bat} and \eqref{eq:ch_bat}, might be violated. 
 Therefore, these constraints are tightened to accommodate the worst case disturbance $w_{i,k}^{\mathrm{max}}$  as follows:
\begin{align}
x^{\mathrm{min}}_i- b_i w_{i,k}^{\mathrm{max}}&\leq a_i x_{i,\ell} +b_i {p}^{\mathrm{st}}_{i,\ell} \leq x^{\mathrm{max}}_i+ b_i w_{i,k}^{\mathrm{max}},  \label{eq:cap_bat_r}\\
-p^{\mathrm{ch}}_i + w_{i,k}^{\mathrm{max}} &\leq {p}^{\mathrm{st}}_{i,\ell} \leq p^{\mathrm{dh}}_i - w_{i,k}^{\mathrm{max}},\label{eq:ch_bat_r}
\end{align}
for all $\ell \in \{k,\dots,k+h_p-1\}$. Hence, the robust reformulation of Problem \eqref{eq:MPC_mg} is stated as follows:
\begin{subequations}
	\begin{align}
	&\underset{\{\{\bm{u}_{i,\ell|k}\}_{i \in \mathcal{N}}\}_{\ell=k}^{k+h_p-1}}{\text{minimize}}  \sum_{i \in \mathcal{N}} \sum_{\ell =k}^{k+h_p-1}  J_{i,\ell}(\bm{u}_{i,\ell|k})\\
	&\quad \quad \text{subject to }  
	 {\bm{F}}_i^{\mathrm{r}}{\bm{u}}_{i,\ell|k} \leq {{\bm{f}}}_{i,\ell}^{\mathrm{r}}, \ \forall i \in \mathcal{N}, \label{eq:mg_loc_cons_r}\\
	&\quad \quad \quad {\bm{u}}^{\mathrm{c}}_{i,\ell|k} + \sum_{j \in \mathcal{N}_i} \bm{G}_{ij}{\bm{u}}^{\mathrm{c}}_{j,\ell|k} = \mathbf{0}, \ \forall i \in \mathcal{N}, \label{eq:mg_coup_cons_r}
	\end{align}
	\label{eq:MPC_mg_r}%
\end{subequations}
for all $\ell \in \{k,\dots,k+h_p-1\}$,
where \eqref{eq:mg_loc_cons_r} with the appropriate ${\bm{F}}_i^{\mathrm{r}}$ and ${{\bm{f}}}_{i,\ell}^{\mathrm{r}}$ is defined according to \eqref{eq:pow_bal1}, \eqref{eq:dyn_bat}, \eqref{eq:cap_pg}-\eqref{eq:cap_t}, and \eqref{eq:w_max}-\eqref{eq:ch_bat_r}.

\begin{proposition}\label{prop:1}
Suppose that Assumption \ref{as:feas_sol} holds. Problem \eqref{eq:MPC_mg_r} has feasible solutions  if and only if 
	\begin{equation}
		w_{i,k}^{\mathrm{max}} \leq \min \left(\frac{1}{2}\left(p^{\mathrm{ch}}_i+p^{\mathrm{dh}}_i\right),-\frac{1}{2b_i}\left(x^{\mathrm{max}}_i-x^{\mathrm{min}}_i\right) \right).
		\label{eq:cond_prop1}
	\end{equation}
Furthermore, suppose that both Assumption \ref{assu:f=1} and \eqref{eq:cond_prop1} hold. Then, any feasible solution of Problem \eqref{eq:MPC_mg_r} does not violate operational constraints \eqref{eq:pow_bal1}-\eqref{eq:cap_t} even though an attack, which is defined in Definition \ref{def:attack}, occurs.	\eod
\end{proposition}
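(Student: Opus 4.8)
The plan is to prove the two claims in sequence, since the second depends on the feasibility established in the first. For the feasibility part (the ``if and only if''), I would work directly with the tightened storage constraints \eqref{eq:cap_bat_r}--\eqref{eq:ch_bat_r}. First I would observe that Problem \eqref{eq:MPC_mg_r} differs from Problem \eqref{eq:MPC_mg} only in replacing \eqref{eq:cap_bat} and \eqref{eq:ch_bat} by their tightened versions, and that by Assumption \ref{as:feas_sol} there is a feasible point of \eqref{eq:MPC_mg} with all $p^{\mathrm t}_{ij,\ell|k}=0$. For such a point the relevant degrees of freedom reduce to choosing $p^{\mathrm{st}}_{i,\ell|k}$ so that the storage box constraints and the SoC box constraints are simultaneously satisfiable while power balance \eqref{eq:pow_bal1} is met by the remaining local variables $p^{\mathrm G}_{i,\ell|k}$, $p^{\mathrm{im}}_{i,\ell|k}$. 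The key point is that the tightened interval for $p^{\mathrm{st}}_{i,\ell|k}$ in \eqref{eq:ch_bat_r}, namely $[-p^{\mathrm{ch}}_i+w^{\mathrm{max}}_{i,k},\,p^{\mathrm{dh}}_i-w^{\mathrm{max}}_{i,k}]$, is nonempty exactly when $w^{\mathrm{max}}_{i,k}\le\frac12(p^{\mathrm{ch}}_i+p^{\mathrm{dh}}_i)$; and similarly the tightened SoC band \eqref{eq:cap_bat_r} remains a nonempty, consistent constraint on the reachable set of $x_{i,\ell}$ only when $b_i w^{\mathrm{max}}_{i,k}\le\frac12(x^{\mathrm{max}}_i-x^{\mathrm{min}}_i)$ in absolute value, i.e.\ $w^{\mathrm{max}}_{i,k}\le-\frac1{2b_i}(x^{\mathrm{max}}_i-x^{\mathrm{min}}_i)$ since $b_i<0$. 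Taking the intersection over these two conditions and over all agents $i$ gives \eqref{eq:cond_prop1}. The ``only if'' direction follows because if either inequality fails for some $i$, the corresponding tightened constraint set is empty, so no feasible solution exists regardless of the rest.

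For the second claim, I would argue as follows. Fix a feasible solution $\{\bm u_{i,\ell|k}\}$ of \eqref{eq:MPC_mg_r} and suppose an attack occurs at time $k$. By Definition \ref{def:attack} and the adversary model, a regular agent $i\in\mathcal R$ sees at most one adversarial neighbor (Assumption \ref{assu:f=1}) who deviates by requesting $p^{\mathrm t}_{ij,k}\ne p^{\mathrm t\star}_{ij,k|k}$, and the magnitude of this deviation plus the load mismatch is bounded by $w^{\mathrm{max}}_{i,k}$ defined in \eqref{eq:w_max}, using \eqref{eq:cap_t} for the transfer bound and \eqref{eq:pd_bound} for the load. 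The regular agent absorbs this total disturbance $w_{i,k}$ entirely through its storage, i.e.\ the realized storage power becomes $\tilde p^{\mathrm{st}}_{i,k}=p^{\mathrm{st}}_{i,k|k}+w_{i,k}$ so that \eqref{eq:pow_bal1} holds with the realized transfers; the remaining local variables $p^{\mathrm G}_{i,k|k}$, $p^{\mathrm{im}}_{i,k|k}$ are kept at their planned values, so \eqref{eq:cap_pg}, \eqref{eq:cap_im}, and $p^{\mathrm t}$-bounds \eqref{eq:cap_t} on the regular agent's own side are untouched. It then remains to verify \eqref{eq:ch_bat} and \eqref{eq:cap_bat} for the realized quantities: from \eqref{eq:ch_bat_r} we have $-p^{\mathrm{ch}}_i+w^{\mathrm{max}}_{i,k}\le p^{\mathrm{st}}_{i,k|k}\le p^{\mathrm{dh}}_i-w^{\mathrm{max}}_{i,k}$, and since $|w_{i,k}|\le w^{\mathrm{max}}_{i,k}$ we get $-p^{\mathrm{ch}}_i\le\tilde p^{\mathrm{st}}_{i,k}\le p^{\mathrm{dh}}_i$, which is \eqref{eq:ch_bat}. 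Feeding $\tilde p^{\mathrm{st}}_{i,k}$ into \eqref{eq:dyn_bat} yields a realized SoC $\tilde x_{i,k+1}=a_i x_{i,k}+b_i\tilde p^{\mathrm{st}}_{i,k}=(a_ix_{i,k}+b_ip^{\mathrm{st}}_{i,k|k})+b_iw_{i,k}$, and \eqref{eq:cap_bat_r} together with $|b_iw_{i,k}|\le|b_i|w^{\mathrm{max}}_{i,k}=-b_iw^{\mathrm{max}}_{i,k}$ gives $x^{\mathrm{min}}_i\le\tilde x_{i,k+1}\le x^{\mathrm{max}}_i$, i.e.\ \eqref{eq:cap_bat}. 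Thus all of \eqref{eq:pow_bal1}--\eqref{eq:cap_t} hold for the regular agent under the attack.

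I expect the main obstacle to be the feasibility characterization rather than the constraint-satisfaction claim. The delicate points there are (i) making precise that the tightening only interacts with the storage constraints, so that the ``island-mode'' feasible point from Assumption \ref{as:feas_sol} can be perturbed in $p^{\mathrm{st}}_{i,\ell|k}$ (and compensatorily in $p^{\mathrm G}$, $p^{\mathrm{im}}$) to land inside the tightened set whenever \eqref{eq:cond_prop1} holds; and (ii) handling the \emph{dynamic} coupling in \eqref{eq:cap_bat_r} across the horizon $\ell=k,\dots,k+h_p-1$ — one must check that the SoC band stays consistent along the predicted trajectory, not just pointwise, which uses $a_i\in(0,1]$ to keep the reachable SoC set controllable within the band of width $(x^{\mathrm{max}}_i-x^{\mathrm{min}}_i)+2|b_i|w^{\mathrm{max}}_{i,k}$. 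A clean way to organize this is to note that the per-step feasible interval for $p^{\mathrm{st}}_{i,\ell|k}$ induced jointly by \eqref{eq:cap_bat_r} and \eqref{eq:ch_bat_r} is nonempty iff \eqref{eq:cond_prop1} holds, and that nonemptiness of these intervals, combined with the free variables $p^{\mathrm G}$, $p^{\mathrm{im}}$ for balancing, is exactly what is needed to extend a feasible point of \eqref{eq:MPC_mg} to one of \eqref{eq:MPC_mg_r}. For the converse I would simply contrapose: violation of \eqref{eq:cond_prop1} at some $i$ empties the corresponding tightened interval, so \eqref{eq:MPC_mg_r} is infeasible.
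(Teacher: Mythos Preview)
Your proposal is correct and follows essentially the same approach as the paper: derive \eqref{eq:cond_prop1} from the requirement that the tightened intervals in \eqref{eq:cap_bat_r} and \eqref{eq:ch_bat_r} be nonempty, and then verify the second claim by checking that absorbing the bounded disturbance in the storage keeps the realized quantities inside the original bounds \eqref{eq:cap_bat}--\eqref{eq:ch_bat}. The paper's own proof is considerably terser---it simply writes down the two inequalities $x^{\mathrm{min}}_i-b_iw^{\mathrm{max}}_{i,k}\le x^{\mathrm{max}}_i+b_iw^{\mathrm{max}}_{i,k}$ and $-p^{\mathrm{ch}}_i+w^{\mathrm{max}}_{i,k}\le p^{\mathrm{dh}}_i-w^{\mathrm{max}}_{i,k}$ and asserts the second claim ``follows from the formulation''---so your more careful treatment of the ``if'' direction (perturbing the island-mode point from Assumption~\ref{as:feas_sol}) and of the explicit bookkeeping for $\tilde p^{\mathrm{st}}_{i,k}$ and $\tilde x_{i,k+1}$ actually fills in steps the paper leaves implicit.
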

\begin{proof}
	The difference between Problems \eqref{eq:MPC_mg} and \eqref{eq:MPC_mg_r} is the fact that the tightened constraints \eqref{eq:cap_bat_r} and \eqref{eq:ch_bat_r} are considered in Problem \eqref{eq:MPC_mg_r}. Thus, a feasible region exists if and only if  $x^{\mathrm{min}}_i- b_i w_{i,k}^{\mathrm{max}}  \leq x^{\mathrm{max}}_i+ b_i w_{i,k}^{\mathrm{max}}$ and 
	$-p^{\mathrm{ch}}_i + w_{i,k}^{\mathrm{max}}  \leq p^{\mathrm{dh}}_i - w_{i,k}^{\mathrm{max}}$. The necessary and sufficient condition \eqref{eq:cond_prop1} is obtained from these two inequalities. Provided that a feasible solution of Problem \eqref{eq:MPC_mg_r} exists, the second claim follows from the formulation of Problem \eqref{eq:MPC_mg_r}. 	 
\end{proof}

If the condition of $w_{i,k}^{\mathrm{max}}$ stated in Proposition \ref{prop:1} is not satisfied, then $p^{\mathrm{G}}_{i,k}$ and/or $p^{\mathrm{im}}_{i,k}$ must also be involved in compensating $w_{i,k}$. In this regard, the constraints related to $p^{\mathrm{G}}_{i,k}$ and $p^{\mathrm{im}}_{i,k}$ must be tightened with similar procedure as that previously explained. For the remaining of the paper, suppose that 
the next assumption holds.

\begin{assumption}\label{assu:feas_MPCr}
	{Condition \eqref{eq:cond_prop1} holds true, implying the existence of feasible solutions of Problem \eqref{eq:MPC_mg_r}.} \eod
\end{assumption}

Therefore, the DMPC method presented in Algorithm \ref{alg:dual_dec} can be then applied to solve Problem \eqref{eq:MPC_mg_r} by simply substituting \eqref{eq:mg_loc_cons} with \eqref{eq:mg_loc_cons_r} in the local optimization problem, i.e., step 5.
\begin{remark}
	Problem \eqref{eq:MPC_mg_r} can also be expressed as a min-max problem \cite{bemporad1999}. However, in this robust reformulation \eqref{eq:MPC_mg_r}, the computational complexity is lower than that in the min-max counterpart. \eod
\end{remark}

\subsection{Attack Identification and Mitigation}
\label{sec:at_det}
In this section, the methodology to identify the adversarial agents in the system and, at the same time, to block the attacks is presented.  It is an active detection strategy, where regular agents test their hypothesis to find their adversarial neighbors by deciding to open/close their connections with their neighbors. The methodology involves applying Bayesian inference for hypothesis testing (e.g., \cite{hoff2009}) and solving mixed-integer optimization problems. Note that in the control literature, Bayesian inference has also been applied to system identification \cite{ninnes2010} and fault detection \cite{fernandez2016}, while hypothesis testing has been used within the framework of fault diagnosis and robust control \cite{ocampo2018}.

Firstly, a regular agent, $i \in  \mathcal{R}$, detects an attack performed by one of its neighbors by evaluating its own SoC at the current time instant as follows:
\begin{equation}
	\Delta_{i,k} = \left|x_{i,k}-\left(x_{i,k-1}+\bm{b}_i^{\top}\bm{u}_{i,k-1}^{\star}+b_i\hat{p}^{\mathrm{d}}_{i,k-1}\right)\right|, \label{eq:det_at}
\end{equation}
where $\bm{b}_i = b_i [0 \ -\mathds{1}_{2+|\mathcal{N}_i|}^{\top}]^{\top}$. If $\Delta_{i,k} > b_id_{i}^{\mathrm{max}}$, then at $k$, agent $i$ is considered to be attacked, otherwise agent $i$ is not attacked.  
\begin{remark}
An attack $w^{\mathrm{a}}_{i,k}$ such that $|w^{\mathrm{a}}_{i,k}+w^{\mathrm{d}}_{i,k}|\leq  d_{i}^{\mathrm{max}}$ is undetectable since the regular agents cannot distinguish it from the load disturbance. However, such an attack is tolerable since the agents consider the bound of load disturbance as $d_{i}^{\mathrm{max}}$ in the first place.\eod  
\end{remark}

Although an attack can be detected, for $|\mathcal{N}_i| > 1$, it is not possible to determine which neighbor is the adversarial one by only evaluating \eqref{eq:det_at}. Therefore, 
in order to identify the adversarial neighbors, we apply a hypothesis testing method that is based on Bayesian inference \cite{hoff2009}. 

Each agent, $i \in \mathcal{R}$, considers the following set of hypotheses, $\mathcal{H}_i =\{\mathbf{H}^0_i,\mathbf{H}^j_i:j\in \mathcal{N}_i \}$, where the hypotheses are defined as follows:
\begin{itemize}
	\item $\mathbf{H}^0_i$ : There is no attack,
	\item $\mathbf{H}^j_i$ : Neighbor $j$ is an adversarial agent,
\end{itemize}
for all $j \in \mathcal{N}_i$. The Bayesian inference is used as the model to update the probability of the hypotheses as follows:
\begin{equation}
	\mathbb{P}_{k+1}(\mathbf{H}^j_i) = \frac{\mathbb{P}_k(\mathbf{H}^j_i)\mathbb{P}_k(\Delta_{i,k}|\mathbf{H}^j_i)}{\mathbb{P}_k(\Delta_{i,k})}, \label{eq:bay_inf}
\end{equation}
for all $\mathbf{H}^j_i \in \mathcal{H}_i$, where $\mathbb{P}_k(\mathbf{H}^j_i)$ denotes the probability of hypothesis $\mathbf{H}^j_i$ at time instant $k$, $\mathbb{P}_k(\Delta_{i,k})$ denotes the the marginal likelihood of $\Delta_{i,k}$, and $\mathbb{P}_k(\Delta_{i,k}|\mathbf{H}^j_i)$ denotes the probability of observing $\Delta_{i,k}$ given hypothesis $\mathbf{H}^j_i$  and is formulated as follows:
\begin{align*}
	\mathbb{P}_k(\Delta_{i,k}\leq b_id_{i}^{\mathrm{max}}|\mathbf{H}^j_i) &= \begin{cases}
	1, & \text{for $j = 0$,}\\
	1-v_{i,k}^jP_{\mathrm{at}}, & \text{for all $j \in \mathcal{N}_i$},
	\end{cases}\\
	\mathbb{P}_k(\Delta_{i,k} > b_id_{i}^{\mathrm{max}}|\mathbf{H}^j_i) &= \begin{cases}
	0, & \text{for $j = 0$},\\
	v_{i,k}^jP_{\mathrm{at}}, & \text{for all $j \in \mathcal{N}_i$},
	\end{cases}
\end{align*}
where $v_{i,k}^j \in  \{0,1\}$, for all $j \in \mathcal{N}_i$, denote the decision whether agent $i$ connects to and negotiates with neighbor $j$, i.e., $v_{i,k}^j=1$ implies agent $i$ connects to neighbor $j$, whereas $v_{i,k}^j=0$ implies agent $i$ does not connect to neighbor $j$. Note that $\mathbb{P}_{k+1}(\mathbf{H}^j_i)$ is the a posteriori probability of $\mathbf{H}^j_i$ given the event $\Delta_{i,k}$, i.e., $\mathbb{P}_{k+1}(\mathbf{H}^j_i)=\mathbb{P}(\mathbf{H}^j_i|\Delta_{i,k})$. The initial probabilities of all hypotheses are defined as
\begin{equation}
	\mathbb{P}_{0}(\mathbf{H}^j_i) = \begin{cases}
	1 - P_{\mathrm{at}}, &  \text{for $j = 0$,} \\
	P_{\mathrm{at}}/ |\mathcal{N}_i| & \text{for all $j \in \mathcal{N}_i$},
	\end{cases}
	\label{eq:in_PH}
\end{equation}
implying that it is initially considered that each neighbor is equally likely to be adversarial.

In order to decide the connection that a regular agent $i \in\mathcal{R}$ will have with its neighbors at each time instant, each agent $i \in \mathcal{R}$ solves a local mixed-integer optimization problem of the form:
\begin{subequations}
\begin{align}
\underset{\bm{v}_{i,k},\{\bm{u}_{i,\ell|k}\}_{\ell=k}^{k+h_p-1}}{\text{minimize}} & \sum_{\ell =k}^{k+h_p-1}  J_{i,\ell}(\bm{u}_{i,\ell|k}) + J^{\mathrm{v}}_{i}(\bm{v}_{i,k})\\
\text{subject to }  
& \bm{F}_i^{\mathrm{lc}}{\bm{u}}_{i,\ell|k} + \bm{F}_{\mathrm{v},i}^{\mathrm{lc}}\bm{v}_{i,k} \leq {\bm{f}}^{\mathrm{lc}}_{i,\ell}, \label{eq:mg_loc_cons1}\\
& \bm{v}_{i,k} \in \mathcal{C}_i \cup \{\mathds{1}_{|\mathcal{N}_i|} \}, \label{eq:v_set}
\end{align}
\label{eq:MPC_mg_local}%
\end{subequations}
where $\bm{v}_{i,k}=[v_{i,k}^j]^{\top}_{j \in \mathcal{N}_i}$. Here, the cost function $J^{\mathrm{v}}_{i}(\bm{v}_{i,k}):\mathbb{R}^{|\mathcal{N}_i|} \to \mathbb{R}$ penalizes the decision of having a connection with the neighbors. It is expressed as follows:
$$J^{\mathrm{v}}_{i}(\bm{v}_{i,k}) =  \gamma n_{\mathrm{at}} \sum_{j \in \mathcal{N}_i}\mathbb{P}_k(\mathbf{H}_i^j)(v_{i,k}^j)^{2},$$
where $\gamma \in \mathbb{R}_{>0}$ denotes a weight that can be tuned and $n_{\mathrm{at}}$ denotes the number of attacks that agent $i$ has received, i.e., the number of time instants at which $\Delta_{i,k} > b_i d_i^{\mathrm{max}}$. By having $n_{\mathrm{at}}$ as a weight, establishing a connection with a neighbor is penalized more if the number of received attacks increases. Moreover, \eqref{eq:mg_loc_cons1} is obtained from \eqref{eq:pow_bal1}, \eqref{eq:dyn_bat}, \eqref{eq:cap_pg}, \eqref{eq:cap_im}, \eqref{eq:cap_bat_r}, and \eqref{eq:ch_bat_r} as well as from the following expressions:
\begin{align}
w_{i,k}^{\mathrm{max}} = \max_{j \in \mathcal{N}_i}\left(2p^{\mathrm{t},\mathrm{max}}_{ji}v_{i,k}^{j}\right)+d_i^{\mathrm{max}}, \label{eq:w_max1}\\
-p^{\mathrm{t},\mathrm{max}}_{ji}v_{i,k}^j \leq p^{\mathrm{t}}_{ji,\ell} \leq p^{\mathrm{t},\mathrm{max}}_{ji}v_{i,k}^j, \quad \forall j \in\mathcal{N}_i, \label{eq:cap_t1}
\end{align}
for all $\ell \in \{k,\dots,k+h_p-1\}$, whereas, in the constraint \eqref{eq:v_set}, $\mathcal{C}_i =\{\bm{z}_j =\mathds{1}_{|\mathcal{N}_i|}-\bm{e}_j, j = 1,2,\dots,|\mathcal{N}_i| \}$, where $\bm{e}_j$, for all $j = 1,2,\dots|\mathcal{N}_i|$, are  the standard basis vectors of $|\mathcal{N}_i|$-dimensional Euclidean space. 

Problem \eqref{eq:MPC_mg_local} is a mixed-integer quadratic program (MIQP) due to the existence of $\bm{v}_{i,k}$. Notice that we penalize $v_{i,k}^j$, for each $j \in \mathcal{N}_i$, proportionally to the probability value of the hypothesis associated to neighbor $j$, $\mathbb{P}_k(\mathbf{H}_i^j)$. 
Furthermore, \eqref{eq:v_set} implies that agent $i$ only allows that it is disconnected from at most one neighbor. This means that there are only $|\mathcal{N}_i|+1$ possible solutions of $\bm{v}_{i,k}$. In addition, this constraint is added based on Assumption \ref{assu:f=1}.  
\begin{proposition}
	{Suppose that Assumptions \ref{as:feas_sol} and \ref{assu:feas_MPCr} hold. Then, Problem \eqref{eq:MPC_mg_local} has feasible solutions.} \eod
\end{proposition}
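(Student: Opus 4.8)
The plan is to exhibit an explicit feasible point of Problem~\eqref{eq:MPC_mg_local} by reusing the feasibility already established for the robust problem~\eqref{eq:MPC_mg_r}. First I would note that, under Assumptions~\ref{as:feas_sol} and~\ref{assu:feas_MPCr}, Proposition~\ref{prop:1} guarantees that Problem~\eqref{eq:MPC_mg_r} possesses a feasible solution; fix one and denote by $\{\bm{u}^{\dagger}_{i,\ell|k}\}_{\ell=k}^{k+h_p-1}$ the block of local variables it assigns to the agent $i$ under consideration. By construction this block satisfies the robustified local constraints~\eqref{eq:mg_loc_cons_r}.

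The key observation is that the configuration $\bm{v}_{i,k}=\mathds{1}_{|\mathcal{N}_i|}$ (agent $i$ connected to all of its neighbors) collapses the $\bm{v}$-parametrized constraints of Problem~\eqref{eq:MPC_mg_local} onto those of Problem~\eqref{eq:MPC_mg_r}. Indeed, setting $v^{j}_{i,k}=1$ for all $j\in\mathcal{N}_i$ turns~\eqref{eq:w_max1} into~\eqref{eq:w_max} and~\eqref{eq:cap_t1} into~\eqref{eq:cap_t}; since the remaining ingredients used to build~\eqref{eq:mg_loc_cons1}, namely~\eqref{eq:pow_bal1},~\eqref{eq:dyn_bat},~\eqref{eq:cap_pg},~\eqref{eq:cap_im},~\eqref{eq:cap_bat_r} and~\eqref{eq:ch_bat_r}, are exactly those used to build~\eqref{eq:mg_loc_cons_r}, the constraint~\eqref{eq:mg_loc_cons1} evaluated at $\bm{v}_{i,k}=\mathds{1}_{|\mathcal{N}_i|}$ describes the same polyhedron in $\bm{u}_{i,\ell|k}$ as~\eqref{eq:mg_loc_cons_r}. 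Hence $\{\bm{u}^{\dagger}_{i,\ell|k}\}_{\ell}$ satisfies~\eqref{eq:mg_loc_cons1} together with this choice of $\bm{v}_{i,k}$, while the constraint~\eqref{eq:v_set} holds trivially because $\mathds{1}_{|\mathcal{N}_i|}\in\mathcal{C}_i\cup\{\mathds{1}_{|\mathcal{N}_i|}\}$. Therefore $(\bm{v}_{i,k},\{\bm{u}_{i,\ell|k}\}_{\ell})=(\mathds{1}_{|\mathcal{N}_i|},\{\bm{u}^{\dagger}_{i,\ell|k}\}_{\ell})$ is feasible for Problem~\eqref{eq:MPC_mg_local}, which proves the claim.

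As for difficulty, there is essentially no hard step: the argument reduces to checking that the ``connect to everyone'' setting reproduces the robust formulation whose feasibility is furnished by Assumption~\ref{assu:feas_MPCr}. The only point requiring care is the bookkeeping in the second paragraph — verifying term by term that~\eqref{eq:mg_loc_cons1} restricted to $\bm{v}_{i,k}=\mathds{1}_{|\mathcal{N}_i|}$ and~\eqref{eq:mg_loc_cons_r} define the same constraint set in $\bm{u}_{i,\ell|k}$, and confirming that no coupling or state-measurement constraint present in~\eqref{eq:MPC_mg_local} is dropped in this reduction. An alternative route, if one preferred to avoid invoking Proposition~\ref{prop:1}, would be to build a feasible $\bm{u}_{i,\ell|k}$ directly from the island-mode solution provided by Assumption~\ref{as:feas_sol} (all transfers set to zero) and check that the tightened storage bounds~\eqref{eq:cap_bat_r}--\eqref{eq:ch_bat_r} hold under condition~\eqref{eq:cond_prop1}; however, the reduction through~\eqref{eq:MPC_mg_r} is the more economical argument.
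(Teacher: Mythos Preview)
Your argument is correct. You exhibit a feasible point by setting $\bm{v}_{i,k}=\mathds{1}_{|\mathcal{N}_i|}$ and borrowing agent $i$'s block from a feasible solution of Problem~\eqref{eq:MPC_mg_r}, whose existence is precisely the content of Assumption~\ref{assu:feas_MPCr}. The term-by-term check that~\eqref{eq:w_max1} and~\eqref{eq:cap_t1} collapse to~\eqref{eq:w_max} and~\eqref{eq:cap_t} under $\bm{v}_{i,k}=\mathds{1}_{|\mathcal{N}_i|}$, and that the remaining ingredients of~\eqref{eq:mg_loc_cons1} and~\eqref{eq:mg_loc_cons_r} coincide, is accurate.

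The paper takes the other route you sketch at the end: it fixes an \emph{arbitrary} $\bm{v}_{i,k}\in\mathcal{C}_i\cup\{\mathds{1}_{|\mathcal{N}_i|}\}$, observes that any such choice can only shrink $w_{i,k}^{\max}$ in~\eqref{eq:w_max1} relative to~\eqref{eq:w_max} so that condition~\eqref{eq:cond_prop1} is preserved, and then takes the island-mode inputs ($p^{\mathrm{t}}_{ji,\ell}=0$ for all $j\in\mathcal{N}_i$) furnished by Assumption~\ref{as:feas_sol}. The paper's construction thus establishes the slightly stronger fact that \emph{every} admissible $\bm{v}_{i,k}$ admits a compatible $\bm{u}$-block, at the cost of having to argue separately that the tightened storage bounds~\eqref{eq:cap_bat_r}--\eqref{eq:ch_bat_r} remain satisfiable. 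Your reduction to~\eqref{eq:MPC_mg_r} is more economical for the proposition as stated, since it offloads that verification entirely onto Assumption~\ref{assu:feas_MPCr}; the paper's version is more informative about the structure of the feasible set in $\bm{v}_{i,k}$.
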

\begin{proof}
	 Any solution of $\bm{v}_{i,k} \in \mathcal{C}_i \cup \{\mathds{1}_{|\mathcal{N}_i|} \}$  implies the satisfaction of \eqref{eq:cond_prop1} since Assumption \ref{assu:feas_MPCr} holds and yields a feasible solution of  $\{\bm{u}_{i,\ell|k}\}_{\ell=k}^{k+h_p-1}$ by choosing $ p^{\mathrm{t}}_{ji,\ell} = 0$, for all $j \in\mathcal{N}_i$ and $\ell \in \{k,\dots,k+h_p-1\}$ since this solution satisfies \eqref{eq:cap_t1} and, according to Assumption \ref{as:feas_sol}, also satisfies \eqref{eq:pow_bal1},\eqref{eq:dyn_bat}-\eqref{eq:cap_im}.
\end{proof}

Finally, suppose that the decision $ \bm{v}_{i,k}^{\star}=[v_{i,k}^{j\star}]^{\top}_{j \in \mathcal{N}_i}$ is the solution obtained from solving Problem \eqref{eq:MPC_mg_local}. Now, instead of using \eqref{eq:w_max}, each agent $i \in \mathcal{R}$ computes the worst case of the disturbance by plugging $\bm{v}_{i,k}^{\star}$ into \eqref{eq:w_max1}. Thus, in the robust problem \eqref{eq:MPC_mg_r}, the local constraints \eqref{eq:mg_loc_cons_r} are switched by \eqref{eq:mg_loc_cons1} with $\bm{v}_{i,k} = \bm{v}_{i,k}^{\star},$ for all $i \in \mathcal{N}$. 

\subsection{Overall Scheme and Sub-optimality Bound} 
\label{sec:ov_sc}
The overall scheme of the proposed method is given in Algorithm \ref{alg:o_s}.
\begin{algorithm}
	\caption{\color{black}Resilient distributed algorithm, for $i \in \mathcal{R}$}
	\begin{algorithmic}[1]
		\State Initialize the hypothesis probabilities according to \eqref{eq:in_PH} 
		\For{$k = 1,2,\dots$}
		\State Evaluate \eqref{eq:det_at} to detect an attack
		\State Update the probability value of the hypotheses according to \eqref{eq:bay_inf}
		\If{$\mathbb{P}_k(\mathbf{H}_i^j)=1, \ j \in \mathcal{N}_i,$}
		\State ${v}_{i,k}^{j\star}=\begin{cases}
		0, \quad \text{for } \mathbb{P}_k(\mathbf{H}_i^j)=1,\\
		1, \quad \text{for } \mathbb{P}_k(\mathbf{H}_i^j)=0
		\end{cases}$ 
		\State \label{std1}  Compute $\bm{u}_{i,k|k}^{\star}$ by solving \eqref{eq:MPC_mg_r}, considering \eqref{eq:mg_loc_cons_r} is formed by \eqref{eq:pow_bal1}, \eqref{eq:dyn_bat}, \eqref{eq:cap_pg}, \eqref{eq:cap_im}, \eqref{eq:cap_bat_r}, \eqref{eq:ch_bat_r}, \eqref{eq:cap_t1} with $\bm{v}_{i,k} = \bm{v}_{i,k}^{\star}$, and $w_{i,k}^{\mathrm{max}}=d_i^{\mathrm{max}}$, using Algorithm \ref{alg:dual_dec}
		\Else
		\State Compute ${v}_{i,k}^{j\star}$, for all $j \in \mathcal{N}_i$, by solving \eqref{eq:MPC_mg_local}
		\State \label{std} Compute $\bm{u}_{i,k|k}^{\star}$ by solving \eqref{eq:MPC_mg_r}, considering \eqref{eq:mg_loc_cons_r} is formed by \eqref{eq:pow_bal1}, \eqref{eq:dyn_bat}, \eqref{eq:cap_pg}, \eqref{eq:cap_im}, \eqref{eq:cap_bat_r}, \eqref{eq:ch_bat_r}, \eqref{eq:w_max1} and \eqref{eq:cap_t1}, with $\bm{v}_{i,k} = \bm{v}_{i,k}^{\star}$, using Algorithm \ref{alg:dual_dec}
		\EndIf
		\State Apply $\bm{u}_{i,k|k}^{\star}$ and  $\bm{v}_{i,k}^{\star}$ 
		\EndFor
	\end{algorithmic}
	\label{alg:o_s}
\end{algorithm}
\begin{assumption}\label{assu:connection}
	{Any agent can temporarily disconnect the physical link between itself and its neighbors, respecting the decision of  $\bm{v}_{i,k}^{\star}$. Two agents, $i$ and $j$, where $(i,j) \in \mathcal{E}$, can only exchange energy if and only if ${v}_{i,k}^{j\star}={v}_{j,k}^{i\star}=1$.} \eod
\end{assumption}

Assumption \ref{assu:connection} implies that, although there exists a connection between agents $i$ and $j$, either of them can block the influence by closing the connection. The decisions obtained by performing Algorithm \ref{alg:o_s} are characterized by the following Proposition \ref{prop:out_dec}.
\begin{proposition}
	\label{prop:out_dec}
	{Suppose that Assumptions \ref{as:feas_sol}-\ref{assu:connection} hold. If the regular agents, i.e. all $i \in \mathcal{R}$, apply Algorithm \ref{alg:o_s}, then the obtained decision $\bm{u}_{i,k}^{\star}$, for all $i \in \mathcal{R}$, do not violate the operational constraints \eqref{eq:pow_bal1}-\eqref{eq:cap_t} under an attack that is defined by Definition \ref{def:attack}, for all $k \in \mathbb{Z}_{\geq 0}$.} \eod
\end{proposition}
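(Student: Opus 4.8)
The plan is to argue per regular agent $i\in\mathcal{R}$ and per time instant $k$, exploiting the fact that, whichever branch of Algorithm~\ref{alg:o_s} is executed, agent $i$ ultimately computes $\bm{u}^{\star}_{i,k|k}$ by solving a robust problem of the form \eqref{eq:MPC_mg_r} whose storage constraints are tightened by the branch‑specific bound $w^{\mathrm{max}}_{i,k}$ --- either $w^{\mathrm{max}}_{i,k}=d^{\mathrm{max}}_i$ (step~7, when a neighbor has been identified) or the value produced by \eqref{eq:w_max1} with $\bm{v}_{i,k}=\bm{v}^{\star}_{i,k}$ (step~10). It therefore suffices to establish, for each branch, (a) that this robust problem is feasible, so that Algorithm~\ref{alg:dual_dec} returns a point satisfying \eqref{eq:mg_loc_cons_r}, and (b) that the disturbance $w_{i,k}=w^{\mathrm{a}}_{i,k}+w^{\mathrm{d}}_{i,k}$ actually realized at $i$ under the attack obeys $|w_{i,k}|\le w^{\mathrm{max}}_{i,k}$. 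Granted (a) and (b), the tightening argument already used in the proof of Proposition~\ref{prop:1} --- applied verbatim but with the branch‑specific $w^{\mathrm{max}}_{i,k}$ --- shows that compensating $w_{i,k}$ through $p^{\mathrm{st}}_{i,k}$ keeps \eqref{eq:cap_bat}--\eqref{eq:ch_bat} satisfied and \eqref{eq:pow_bal1} met, while \eqref{eq:cap_pg}, \eqref{eq:cap_im}, \eqref{eq:cap_t} concern only the unperturbed local inputs of $\bm{u}^{\star}_{i,k|k}$ and hold because that point is feasible for \eqref{eq:mg_loc_cons_r}.

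For (a) I would simply observe that the bound used in any branch never exceeds $\max_{j\in\mathcal{N}_i}\left(2p^{\mathrm{t},\mathrm{max}}_{ji}\right)+d^{\mathrm{max}}_i$, for which \eqref{eq:cond_prop1} holds by Assumption~\ref{assu:feas_MPCr}; hence \eqref{eq:cond_prop1} holds a fortiori for the smaller $w^{\mathrm{max}}_{i,k}$ actually used, and Proposition~\ref{prop:1} (together with Assumption~\ref{as:feas_sol}, which provides the feasible point with all $p^{\mathrm{t}}_{ji,\ell}=0$, exactly as in the feasibility proof of Problem~\eqref{eq:MPC_mg_local}) guarantees feasibility of the corresponding robust problem.

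Part (b) is the core of the argument and is where Assumptions~\ref{assu:f=1} and \ref{assu:connection} are needed. If $i$ has no adversarial neighbor, no neighbor deviates, $w^{\mathrm{a}}_{i,k}=0$, and $|w_{i,k}|\le d^{\mathrm{max}}_i\le w^{\mathrm{max}}_{i,k}$. Otherwise, by Assumption~\ref{assu:f=1}, $i$ has exactly one adversarial neighbor $j$. In the branch of steps~5--7, $j$ has been identified ($\mathbb{P}_k(\mathbf{H}^j_i)=1$), so $v^{j\star}_{i,k}=0$ and, by Assumption~\ref{assu:connection}, link $(i,j)$ carries no power; the remaining neighbors, still connected, are regular and implement their DMPC decisions, so $w^{\mathrm{a}}_{i,k}=0$ and $|w_{i,k}|\le d^{\mathrm{max}}_i=w^{\mathrm{max}}_{i,k}$. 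In the branch of steps~9--10, let $\bm{v}^{\star}_{i,k}$ be the MIQP solution: if $v^{j\star}_{i,k}=0$ the same reasoning gives $w^{\mathrm{a}}_{i,k}=0$; if $v^{j\star}_{i,k}=1$, the only deviating flow is on link $(i,j)$, which is physically limited by \eqref{eq:cap_t}, so $|w^{\mathrm{a}}_{i,k}|=|p^{\mathrm{t}}_{ji,k}-p^{\mathrm{t}\star}_{ji,k|k}|\le 2p^{\mathrm{t},\mathrm{max}}_{ji}$ and thus $|w_{i,k}|\le 2p^{\mathrm{t},\mathrm{max}}_{ji}+d^{\mathrm{max}}_i\le\max_{m\in\mathcal{N}_i}\left(2p^{\mathrm{t},\mathrm{max}}_{mi}v^{m\star}_{i,k}\right)+d^{\mathrm{max}}_i=w^{\mathrm{max}}_{i,k}$ by \eqref{eq:w_max1}, using $v^{j\star}_{i,k}=1$ in the last step. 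In every case $|w_{i,k}|\le w^{\mathrm{max}}_{i,k}$, completing (b).

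Chaining (a) and (b) over all $i\in\mathcal{R}$ and all $k\in\mathbb{Z}_{\geq 0}$ yields the proposition. The step I expect to be the main obstacle is making the ``identified'' branch airtight: the reduced bound $w^{\mathrm{max}}_{i,k}=d^{\mathrm{max}}_i$ used in step~7 is sound only because (i) Assumption~\ref{assu:f=1} guarantees that disconnecting the identified neighbor removes the sole source of adversarial influence, and (ii) the likelihoods feeding \eqref{eq:bay_inf} are such that $\mathbb{P}_k(\mathbf{H}^j_i)=1$ cannot be reached for a genuinely regular $j$ while an attack persists --- a point that should be spelled out rather than taken for granted. Everything else reduces to bookkeeping over the branches of Algorithm~\ref{alg:o_s} and the already proven feasibility facts.
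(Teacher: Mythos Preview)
Your proposal is correct and follows essentially the same route as the paper: a per-branch case analysis (step~7 vs.\ step~10, full connection vs.\ one neighbor blocked), feasibility of the robust problem via Assumption~\ref{assu:feas_MPCr} and the $p^{\mathrm{t}}_{ji,\ell}=0$ point from Assumption~\ref{as:feas_sol}, and then an appeal to Proposition~\ref{prop:1}, with Assumption~\ref{assu:connection} ensuring a blocked neighbor cannot inject disturbance. Your decomposition into (a) feasibility and (b) the bound $|w_{i,k}|\le w^{\mathrm{max}}_{i,k}$ is in fact more explicit than the paper's proof, which largely leaves (b) implicit in the invocation of Proposition~\ref{prop:1}.

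The concern you raise at the end --- that the reduced bound $w^{\mathrm{max}}_{i,k}=d^{\mathrm{max}}_i$ in step~7 is only safe if the neighbor with $\mathbb{P}_k(\mathbf{H}^j_i)=1$ is truly the adversary --- is a genuine subtlety. The paper's proof does not spell this out either: it only remarks that ``if agent $j$ is adversarial, then it cannot attack agent $i$,'' without excluding the possibility that the blocked $j$ is regular while the actual adversary remains connected. Under the stated likelihood model and Assumption~\ref{assu:f=1} one can argue that $\mathbb{P}_k(\mathbf{H}^j_i)=1$ is reachable only after every other neighbor $m$ has been disconnected during a detected attack (driving $\mathbb{P}(\mathbf{H}^m_i)$ to zero), which cannot happen for the true adversary since disconnecting it suppresses the attack; but you are right that this deserves to be stated rather than assumed.
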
 

\begin{proof}
	A regular agent $i \in \mathcal{R}$  obtains its control inputs $\bm{u}_{i,k|k}^{\star}$ in either step 7 or 10, based on whether the adversarial neighbor has been identified or not. The difference between steps 7 and 10 is the definition of $w_{i,k}^{\mathrm{max}}$, and it is seen that $d_i^{\mathrm{max}}$ is smaller than or equal to $w_{i,k}^{\mathrm{max}}$ that is expressed in \eqref{eq:w_max1}. By Assumption \ref{assu:feas_MPCr}, $w_{i,k}^{\mathrm{max}}$, expressed in \eqref{eq:w_max1} or in step 7, satisfy \eqref{eq:cond_prop1}. In the case that $\bm{v}_{i,k}^{\star} = \mathds{1}_{|\mathcal{N}_i|}$, we obtain the original robustified problem \eqref{eq:MPC_mg_r} and the claim follows immediately from Proposition \ref{prop:1}. Now, we consider the case that one of the neighbor is blocked. Suppose that agent $j \in \mathcal{N}_i$ is blocked, i.e.,  $v_{i,k}^{j\star} =0$. The constraint \eqref{eq:cap_t1} yields the following equality constraint: $p^{\mathrm{t}}_{ji,\ell} = 0$, for all $\ell \in \{k,\dots,k+h_p-1\}$.  Assumptions \ref{as:feas_sol} and \ref{assu:feas_MPCr}  result in a feasible solution $\bm{u}_{i,k}^{\star}$, where $p^{\mathrm{t}\star}_{ji,k}=0$. Thus, the claim follows from Proposition \ref{prop:1}. Furthermore, by Assumption \ref{assu:connection}, agent $i$ is physically disconnected from agent $j$. Therefore, if agent $j$ is adversarial, then it cannot attack agent $i$. 
\end{proof}

\begin{remark}
	{The proposed attack identification and mitigation methods can be implemented along with any distributed optimization algorithm that is able to solve Problems \eqref{eq:MPC_mg} and \eqref{eq:MPC_mg_r}.} \eod
\end{remark}

We also provide a sub-optimality certificate of the control inputs obtained by performing Algorithm \ref{alg:o_s}, which is stated in Proposition \ref{prop:subopt}.
\begin{proposition}\label{prop:subopt}
{	Suppose that $\{\bm{u}_{i,\ell}^{o}\}_{\ell=k}^{k+h_p-1}$, for all $i \in \mathcal{N}$, are the minimizers of the following problem:}
	\begin{subequations}
		\begin{align}
		\underset{\{\{\bm{u}_{i,\ell|k}\}_{i \in \mathcal{N}}\}_{\ell=k}^{k+h_p-1}}{\text{minimize}} & \sum_{i \in \mathcal{N}} \sum_{\ell =k}^{k+h_p-1}  J_{i,\ell}(\bm{u}_{i,\ell|k})\\
		\text{subject to }  
		& \bm{F}_i{\bm{u}}_{i,\ell|k} \leq {\bm{f}}_{i,\ell}, \ \forall i \in \mathcal{N}, \label{eq:mg_loc_cons_l}
		\end{align}
		\label{eq:MPC_mg_l}%
	\end{subequations}
	{for all $\ell \in \{k,\dots,k+h_p-1\}$, $\{\bm{u}_{i,\ell}^{\star}\}_{\ell=k}^{k+h_p-1}$ denotes the solution obtained from Algorithm \ref{alg:o_s} in step 7 or 10, and  $\{\tilde{\bm{u}}_{i,\ell}^{*}\}_{\ell=k}^{k+h_p-1}$ denotes the solution obtained from solving Problem \eqref{eq:MPC_mg}. Then, the sub-optimality of the solution, i.e. $ \Delta J_{i,k} = \sum_{\ell =k}^{k+h_p-1}  \left(J_{i,\ell}(\bm{u}_{i,\ell}^{\star})-J_{i,\ell}(\tilde{\bm{u}}_{i,\ell}^{*})\right)$, is bounded as follows: 
		$$ \Delta J_{i,k} \leq \sum_{\ell =k}^{k+h_p-1}  \left(J_{i,\ell}(\bm{u}_{i,\ell}^{\star})-J_{i,\ell}(\bm{u}_{i,\ell}^{o})\right).$$}\eod
\end{proposition}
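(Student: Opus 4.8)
The plan is to exploit that Problem \eqref{eq:MPC_mg_l} is at once a relaxation of Problem \eqref{eq:MPC_mg} --- it discards the coupled constraints \eqref{eq:mg_coup_cons} while keeping the same objective and the same (non-robust) local constraints --- and \emph{separable} across agents, so that the statement reduces to a per-agent comparison of $\{\bm{u}_{i,\ell}^{o}\}_{\ell=k}^{k+h_p-1}$ with the agent-$i$ block of $\{\tilde{\bm{u}}_{i,\ell}^{*}\}_{\ell=k}^{k+h_p-1}$, followed by an elementary rearrangement.

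First I would observe that the objective of \eqref{eq:MPC_mg_l}, $\sum_{i\in\mathcal{N}}\sum_{\ell}J_{i,\ell}(\bm{u}_{i,\ell|k})$, is a sum of per-agent terms and the constraints \eqref{eq:mg_loc_cons_l} couple no two agents; hence \eqref{eq:MPC_mg_l} splits into $|\mathcal{N}|$ independent subproblems, and for each $i\in\mathcal{N}$ the block $\{\bm{u}_{i,\ell}^{o}\}_{\ell=k}^{k+h_p-1}$ is the (unique) minimizer of $\sum_{\ell=k}^{k+h_p-1}J_{i,\ell}(\bm{u}_{i,\ell|k})$ over $\{\bm{u}_{i,\ell|k}:\bm{F}_i\bm{u}_{i,\ell|k}\le\bm{f}_{i,\ell},\ \ell=k,\dots,k+h_p-1\}$; existence and uniqueness hold since $R_i>0$ makes each cost a coercive strictly convex quadratic and the feasible polyhedron is nonempty by Assumption \ref{as:feas_sol}. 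Next I would note that the feasible set of \eqref{eq:MPC_mg} is contained in that of \eqref{eq:MPC_mg_l} (they share the local constraints, and \eqref{eq:MPC_mg} only adds \eqref{eq:mg_coup_cons}), so $\{\tilde{\bm{u}}_{i,\ell}^{*}\}$ is feasible for \eqref{eq:MPC_mg_l}; in particular, for each $i$, its agent-$i$ block is feasible for agent $i$'s subproblem. Optimality of $\bm{u}_{i,\ell}^{o}$ then yields
\begin{equation*}
\sum_{\ell=k}^{k+h_p-1}J_{i,\ell}(\bm{u}_{i,\ell}^{o})\ \le\ \sum_{\ell=k}^{k+h_p-1}J_{i,\ell}(\tilde{\bm{u}}_{i,\ell}^{*}),\qquad\forall i\in\mathcal{N}.
\end{equation*}

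Finally I would write $\Delta J_{i,k}$ as $\sum_{\ell}\bigl(J_{i,\ell}(\bm{u}_{i,\ell}^{\star})-J_{i,\ell}(\bm{u}_{i,\ell}^{o})\bigr)$ plus $\sum_{\ell}\bigl(J_{i,\ell}(\bm{u}_{i,\ell}^{o})-J_{i,\ell}(\tilde{\bm{u}}_{i,\ell}^{*})\bigr)$ and discard the second sum, which is nonpositive by the displayed inequality, to obtain the asserted bound. Observe that no property of $\bm{u}_{i,\ell}^{\star}$ beyond being well defined is used, so the robust reformulation and the MIQP of Algorithm \ref{alg:o_s} play no role in the argument. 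The one point that warrants care --- and is really the ``main obstacle'', modest as it is --- is to be explicit that \eqref{eq:MPC_mg_l} carries the \emph{original} local constraints $\bm{F}_i\bm{u}_{i,\ell|k}\le\bm{f}_{i,\ell}$ and not their tightened counterparts, since this is exactly what makes $\{\tilde{\bm{u}}_{i,\ell}^{*}\}$ admissible in the per-agent subproblem and hence comparable to $\{\bm{u}_{i,\ell}^{o}\}$; the remainder is bookkeeping. It is worth adding that the resulting bound is decentralized: each agent $i$ can compute $\{\bm{u}_{i,\ell}^{o}\}$ by itself from its own subproblem, so $\sum_{\ell}\bigl(J_{i,\ell}(\bm{u}_{i,\ell}^{\star})-J_{i,\ell}(\bm{u}_{i,\ell}^{o})\bigr)$ is a certificate computable without global information.
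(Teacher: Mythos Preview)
Your argument is correct and matches the paper's: both use that Problem \eqref{eq:MPC_mg_l} is the relaxation of Problem \eqref{eq:MPC_mg} obtained by dropping the coupled constraints \eqref{eq:mg_coup_cons}, together with its per-agent separability, to conclude $\sum_\ell J_{i,\ell}(\bm{u}^{o}_{i,\ell}) \le \sum_\ell J_{i,\ell}(\tilde{\bm{u}}^{*}_{i,\ell})$ for each $i$, from which the bound follows by rearrangement. The paper additionally argues the companion inequality $\sum_\ell J_{i,\ell}(\bm{u}^{\star}_{i,\ell}) \ge \sum_\ell J_{i,\ell}(\tilde{\bm{u}}^{*}_{i,\ell})$ via the tightened constraints of the robust problem, whereas you correctly note that this is not needed for the stated upper bound.
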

\begin{proof}
	The system achieves global optimal performance if all agents $i \in \mathcal{N}$ apply the solution obtained from solving \eqref{eq:MPC_mg}, implying the adversarial agents do not attack, and the forecast loads are equal to the actual ones.  
	We prove the proposition by showing that 
	the following inequalities hold: $$\sum_{\ell =k}^{k+h_p-1}J_{i,\ell}(\bm{u}_{i,\ell}^{\star}) \geq \sum_{\ell =k}^{k+h_p-1} J_{i,\ell}(\tilde{\bm{u}}_{i,\ell}^{*})\geq \sum_{\ell =k}^{k+h_p-1} J_{i,\ell}(\bm{u}_{i,\ell}^{o}).$$ 
	 Notice that Problem \eqref{eq:MPC_mg_l} is actually a relaxed formulation of Problem \eqref{eq:MPC_mg}, i.e., Problem \eqref{eq:MPC_mg} without constraint \eqref{eq:mg_coup_cons}. Therefore, any feasible solution of Problem \eqref{eq:MPC_mg} is also a feasible solution of Problem \eqref{eq:MPC_mg_l}, but not necessarily vice versa. Furthermore, the constraints imposed in the problem that is solved either in steps \ref{std1} or \ref{std} of Algorithm \ref{alg:o_s} are tighter than those in Problem \eqref{eq:MPC_mg}, implying any feasible solution obtained from applying Algorithm \ref{alg:o_s} is also feasible for Problem \eqref{eq:MPC_mg}, but not necessarily vice versa. 
\end{proof}
\begin{remark}
	{Problem \eqref{eq:MPC_mg_l} is trivially separable since there is no coupling constraint. Therefore, each agent $i \in \mathcal{N}$ can compute $\{\bm{u}_{i,k}^{o}\}_{\ell=k}^{k+h_p-1}$ independently as follows:} 
	\begin{align*}
		\{\bm{u}_{i,k}^{o}\}_{\ell=k}^{k+h_p-1} =  & \underset{\ \ \ \{\bm{u}_{i,\ell|k}\}_{\ell=k}^{k+h_p-1}}{\text{arg min}} \ \sum_{\ell =k}^{k+h_p-1}  J_{i,\ell}(\bm{u}_{i,\ell|k})\\ &\text{subject to} \   
		\bm{F}_i{\bm{u}}_{i,\ell|k} \leq {\bm{f}}_{i,\ell},
	\end{align*}
	 {for all} $\ell \in \{k,\dots,k+h_p-1\}$. \eod
\end{remark}


\section{Case Study}
\label{sec:sim}
As a case study, we use the PG\&E 69-bus distribution network, which has been modified by adding distributed generators and energy storages \cite{arefifar2012}, as depicted in Fig. \ref{fig:cs_top}. We follow  the partition given by \cite{arefifar2012} to divide the network into eight interconnected microgrids (agents).  The operational parameters of each microgrid are given in Table \ref{tab:param}. Furthermore, we consider two types of load profiles, which are industrial and residential, and assign each microgrid to one of the profiles randomly. Moreover, we generate the load profile and load forecast of each microgrid by considering the available load data as the maximum loads. In this case study, microgrids 2, 6, and 7 are chosen to be adversarial and the probability of attacks is set to be 0.3, which is known by the regular agents. Furthermore, the prediction horizon of each agent is $h_p = 4$ steps and we consider one-day simulation with sampling time of 15 minutes. 
\begin{figure}
	\centering
	\includegraphics[scale=1.0]{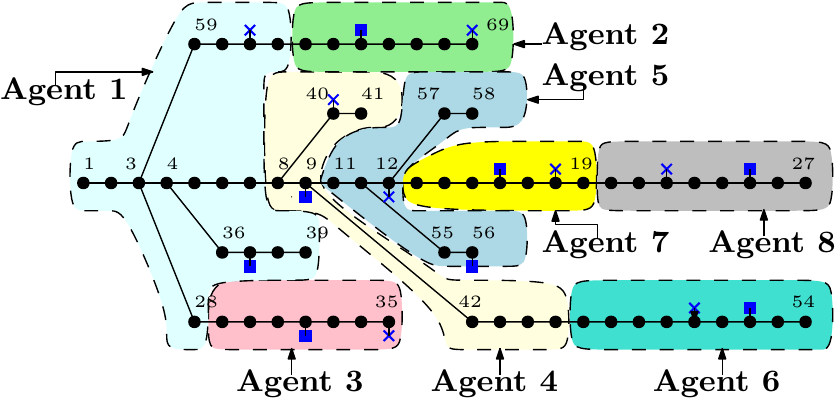}
	\caption{The topology of the PG\&E 69-bus distribution system and its 8-agent resulting partition. Blue crosses and squares indicate the distributed generators and storages, respectively.
	}
	\label{fig:cs_top}
\end{figure}

\begin{table}
	\centering
	\caption{Parameters of the Microgrids}
	\label{tab:param}
	\begin{tabular}{c c c c}
		\hline  \noalign{\smallskip}
		\textbf{Parameters} & \textbf{Value} & \textbf{Unit} & \textbf{Agent ($i$)}  \\ 
		\noalign{\smallskip}\hline  \noalign{\smallskip}
		$x^{\mathrm{min}}_i$, $x^{\mathrm{max}}_i$, $x_{i,0}$ & 40, 70, 55  & \% & all \\ 
		\noalign{\smallskip}
		$p^{\mathrm{ch}}_i$, $p^{\mathrm{dh}}_i$ & 300, 300 & kW & all \\ 
		\noalign{\smallskip}
		$p^{\mathrm{G,min}}_i$, $p^{\mathrm{G,max}}_i$  & 0,1500 & kW & all  \\ 		   
		\noalign{\smallskip}
		$p^{\mathrm{t,max}}_i$, $p^{\mathrm{im,max}}_i$ & 100, 2000 & kW & all  \\ 
		\noalign{\smallskip}
		$e_{\mathrm{cap},i}$ & 1000 & kWh & all  \\
		\noalign{\smallskip}
		$a_i$  & 1.0  & - & all  \\
		\noalign{\smallskip}
		$c^{\mathrm{st}}_i$, $c^{\mathrm{im}}_i$, $c^{\mathrm{t}}_i$ & 1, 250, 0.1 & - & all \\
		\noalign{\smallskip}
		$c^{\mathrm{g}}_i$ & 5 & - & 2, 3, 6, 7 \\		
		\noalign{\smallskip}
		$c^{\mathrm{g}}_i$ & 10 & - & 1, 4, 5, 8\\
		\noalign{\smallskip} \hline
	\end{tabular}
\end{table} 

\begin{algorithm}[H]
	\caption{Distributed robust algorithm, for $i \in \mathcal{R}$}
	\begin{algorithmic}[1]
		\For{$k = 1,2,\dots$}
		\State  \label{std2} Compute $\bm{u}_{i,k}$ by solving Problem \eqref{eq:MPC_mg_r}, considering \eqref{eq:mg_loc_cons_r} is formed by \eqref{eq:pow_bal1}, \eqref{eq:dyn_bat}, \eqref{eq:cap_pg}-\eqref{eq:cap_t}, and \eqref{eq:w_max}-\eqref{eq:ch_bat_r}, with Algorithm \ref{alg:dual_dec}
		\State Apply $\bm{u}_{i,k}$
		\EndFor
	\end{algorithmic}
	\label{alg:o_s_r}
\end{algorithm}

We consider four simulation scenarios, in each of which a different distributed strategy is applied (see Table \ref{tab:perf}). As the baseline performance, in Scenario 1, the nominal approach, i.e., applying Algorithm \ref{alg:dual_dec} to solve Problem \eqref{eq:MPC_mg}, is implemented for the case in which the adversarial agents do not attack and there is no load disturbance, whereas, in Scenario 2, the nominal approach is applied to the case with attacks and load disturbance. In Scenario 3, we apply the robustified approach without attack identification and mitigation as shown in  Algorithm \ref{alg:o_s_r}, while in Scenario 4,  we apply the proposed approach. 
Table \ref{tab:perf} shows the overall performance of the network over the whole simulation time. The proposed approach achieves a better performance than the robustified approach while ensuring the satisfaction of the constraints. As shown in Fig. \ref{fig:x}, in Scenario 2, the minimum limit of the SoC is violated. However, this violation does not occur in Scenarios 3 and 4. Moreover, Fig. \ref{fig:PhPb} shows how agent 1 detects agent 2 as the adversarial neighbor in Scenario 4. Once detected, i.e., at $k=8$, agent 1 disconnects from agent 2. Additionally, the average sub-optimality bound of the proposed approach is 49\% of the nominal performance (Scenario 1), whereas the measured sub-optimality is 18\%.    

\begin{table}
	\centering
	\caption{Total Cost of the System}
	\label{tab:perf}
	\begin{tabular}{c c c c c}
		\hline \noalign{\smallskip}
		\textbf{Scenario}&\textbf{Dist.}	& \textbf{Attack/Load} & \textbf{Cost (Pro-} & \textbf{{Constraint}}  \\ 
		&\textbf{Strategy} & \textbf{ Disturbance} &  \textbf{portional)}& \textbf{{Satisfaction}}\\
		\noalign{\smallskip} \hline
		\noalign{\smallskip} 
		1 & Nominal & No 	&  1.00 & Yes\\ 
		\noalign{\smallskip}
		2 & Nominal	& Yes & 1.06 & No\\ 
		\noalign{\smallskip} 
		3 & Alg. \ref{alg:o_s_r}	& Yes & 1.91 & Yes  \\ 
		\noalign{\smallskip}
		4  & Alg. \ref{alg:o_s} & Yes &  1.18 & Yes\\ 
		\noalign{\smallskip}
		\hline 
	\end{tabular} 
\end{table}

\begin{figure}
	\centering
	\includegraphics[scale=0.63]{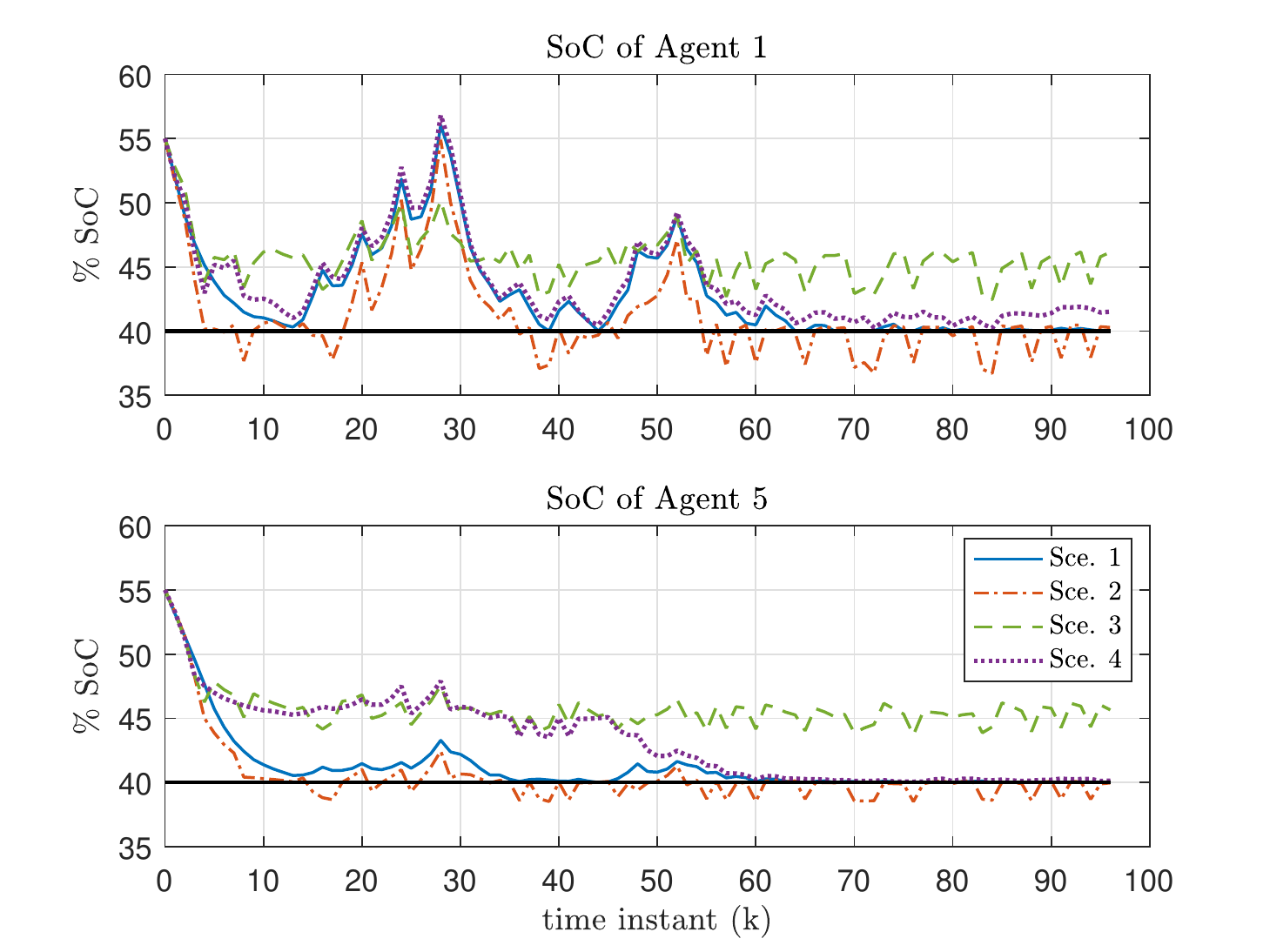}
	\caption{The evolution of the SoC of agents 1 (top) and 5 (bottom). The black horizontal line indicates the minimum limit of SoC, $x_i^{\mathrm{min}}$.
	}
	\label{fig:x}
\end{figure}
\vspace{-2pt}
\begin{figure}
	\centering
	\includegraphics[scale=0.63]{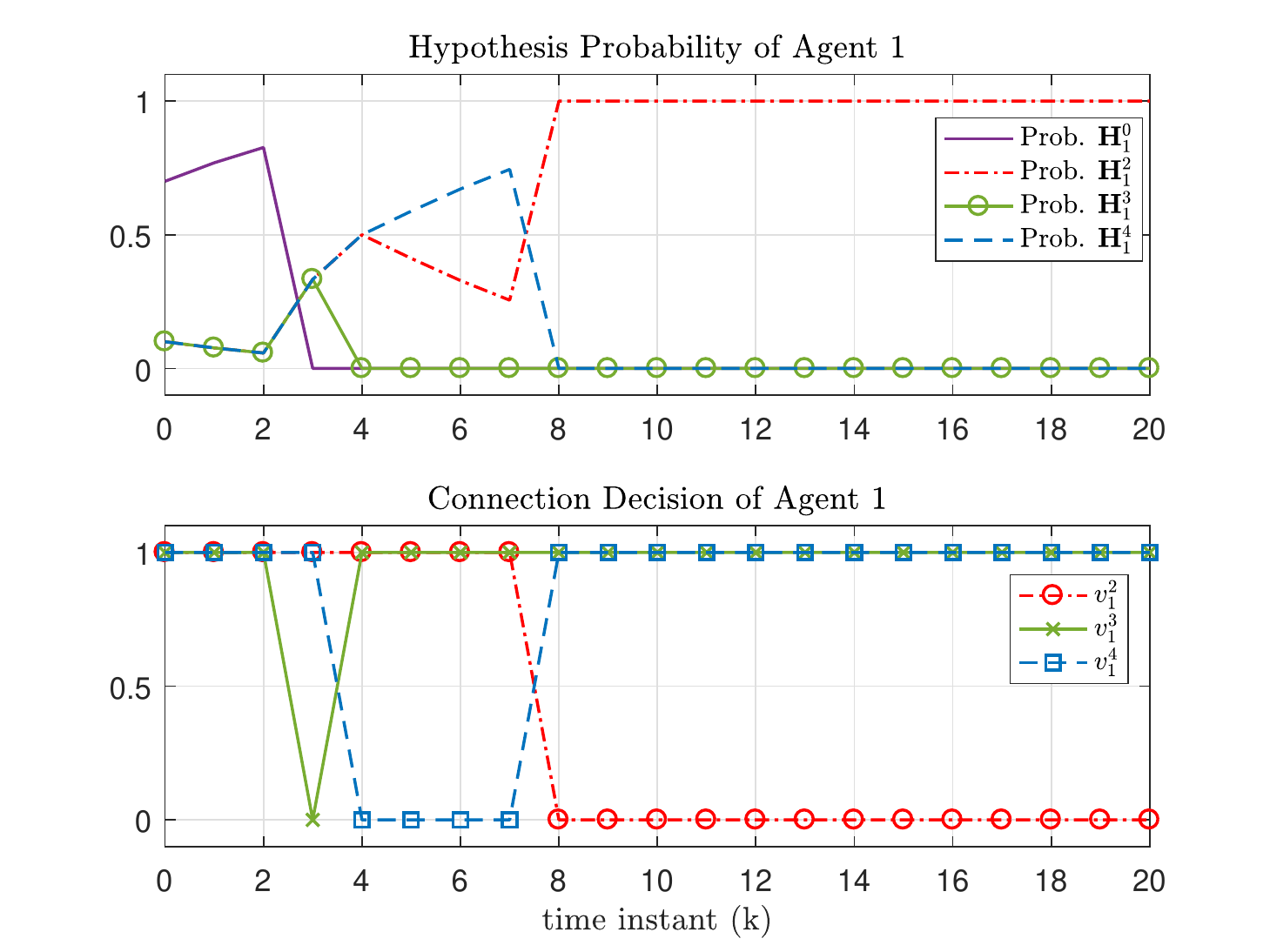}
	\caption{The evolution of the hypothesis probability (top) and the connection decision (bottom) of agent 1. Note that the decision $\bm{v}^{\star}_{1,k}$ are the same for $k = 8, 9, \dots, 96$, since the adversarial neighbor is detected at $k=8$.
	}
	\label{fig:PhPb}
\end{figure}
\section{Conclusion and Future Work}
\label{sec:concl}
A distributed energy management for interconnected microgrid systems that is based on dynamic economic dispatch problem is investigated. We analyze the case of having microgrids that perform an adversarial behavior, i.e., some microgrids do not comply with the decisions obtained from the distributed strategy. Furthermore, we propose a robustified formulation and an attack identification and mitigation method such that the distributed strategy can deal with such adversaries. Additionally, we also provide a sub-optimality certificate of the proposed approach.

Future work includes extending the proposed approach such that the stochasticity of the loads is taken into account explicitly in order to improve the performance and assumptions on the number of adversarial neighbors are relaxed. Furthermore, we will also explore the possibility to improve the detection strategy as well as the attack mitigation method, e.g., by considering that the agents might exchange their hypothesis probability and by considering $\bm{v}_{i,k}$, for all $i\in \mathcal{N}$, as continuous variables that determine the limit of the connections among agents.

\bibliographystyle{ieeetran}
\bibliography{ref_cdc18}

\end{document}